\newtheorem{theorem}{Theorem}
\newtheorem{lemma}{Lemma}
\newtheorem{corollary}{Corollary}
\theoremstyle{definition}
\newtheorem{definition}{Definition}
\theoremstyle{remark}
\newtheorem{claim}{Claim}
\numberwithin{equation}{section}
\numberwithin{figure}{section}
\title{A Simple Condition for the Existence of Transversals}
\author{Arindam Biswas \thanks{Chennai Mathematical Institute, \protect\url{ari_b@cmi.ac.in}}}
\date{}
\begin{document}
\maketitle

\begin{abstract}
Hall's Theorem is a basic result in Combinatorics which states that the obvious necesssary condition for a finite family of sets to have a transversal is also sufficient. We present a sufficient (but not necessary) condition on the sizes of the sets in the family and the sizes of their intersections so that a transversal exists. Using this, we prove that in a bipartite graph $G$ (bipartition $\{A, B\}$), without 4-cycles, if $\deg(v) \geq \sqrt{2e|A|}$ for all $v \in A$, then $G$ has a matching of size $|A|$. 
\end{abstract}

\section{Introduction and Preliminaries}
In this paper, we look at a sufficient (but not necessary) condition for a finite family of sets to have a transversal, which only involves the sizes of the sets and the sizes of their intersections.

We begin by recalling some theorems and definitions necessary for our presentation.

\begin{definition}[Transversal]
Let $\mathcal{F} = \{S_1, \ldots, S_n\}$ be a finite family of sets. A transversal for $\mathcal{F}$ is a tuple $(T, \phi)$, where $T \subseteq \bigcup \mathcal{F}$, and $\phi: T \to \mathcal{F}$ is a bijection such that $x \in \phi(x)$ for all $x \in T$.
\end{definition}

\begin{theorem}[Hall's Theorem]
Let $\mathcal{F} = \{S_1, \ldots, S_n\}$ be a finite family of sets. If for each $\mathcal{F'} \subseteq \mathcal{F}$, 
\[
    \left|\bigcup \mathcal{F'}\right| \geq |\mathcal{F'}|,
\]
then $\mathcal{F}$ has a transversal.
\end{theorem}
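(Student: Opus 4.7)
The plan is to prove Hall's Theorem by strong induction on $n = |\mathcal{F}|$. The base case $n = 1$ is immediate: Hall's condition applied to $\mathcal{F}$ itself forces $|S_1| \geq 1$, so any element of $S_1$ serves as the required transversal. For the inductive step, I would split into two cases depending on whether some proper non-empty subfamily is ``critical,'' i.e.\ attains equality in Hall's condition.

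\textbf{Case 1:} Every proper non-empty $\mathcal{F}' \subsetneq \mathcal{F}$ satisfies the strict inequality $|\bigcup \mathcal{F}'| > |\mathcal{F}'|$. Here I would pick any element $x \in S_n$ (non-empty by Hall's condition on $\{S_n\}$) and pass to the smaller family $\mathcal{F}^{\star} = \{S_1 \setminus \{x\},\ldots,S_{n-1} \setminus \{x\}\}$. Removing a single element from any union shrinks it by at most one, so the strict inequality ensures that $\mathcal{F}^{\star}$ still satisfies Hall's condition. The inductive hypothesis produces a transversal for $\mathcal{F}^{\star}$, which extends to a transversal for $\mathcal{F}$ by assigning $x \mapsto S_n$.

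\textbf{Case 2:} Some proper non-empty $\mathcal{F}' \subsetneq \mathcal{F}$ satisfies $|\bigcup \mathcal{F}'| = |\mathcal{F}'|$. Applying induction to $\mathcal{F}'$ produces a transversal $(T',\phi')$, and the equality forces $T' = \bigcup \mathcal{F}'$. Next I would consider the reduced family $\mathcal{F}'' = \{S \setminus T' : S \in \mathcal{F} \setminus \mathcal{F}'\}$. Checking Hall's condition for $\mathcal{F}''$ is the crucial step: if some $\mathcal{G} \subseteq \mathcal{F}''$ violated it, then pairing $\mathcal{G}$ with $\mathcal{F}'$ would violate Hall's condition on $\mathcal{F}$, because the union of $\mathcal{F}' \cup \{S : S \setminus T' \in \mathcal{G}\}$ in $\mathcal{F}$ decomposes as the disjoint union of $T'$ and $\bigcup \mathcal{G}$. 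The inductive hypothesis then yields a transversal for $\mathcal{F}''$, which concatenates with $(T',\phi')$ — automatically disjointly, thanks to the construction $S \setminus T'$ — to give the desired transversal for $\mathcal{F}$.

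The main obstacle lies in Case 2: one must argue that the reduced family $\mathcal{F}''$ inherits Hall's condition, and verify the disjointness needed to merge the two partial transversals. Both cases strictly decrease the family size (in Case 1 by passing to $n-1$ sets, in Case 2 by splitting off a proper subfamily), so the induction is well-founded.
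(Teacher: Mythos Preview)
Your argument is the classical Halmos--Vaughan inductive proof of Hall's Theorem, and it is correct as written. The two-case split on whether a proper non-empty critical subfamily exists, the removal of a single element in Case~1, and the splitting into $\mathcal{F}'$ and the ``residual'' family $\mathcal{F}''$ in Case~2 are all handled properly; in particular, your verification that $\mathcal{F}''$ inherits Hall's condition via the disjoint decomposition $\bigcup(\mathcal{F}' \cup \mathcal{G}) = T' \sqcup \bigl(\bigcup \mathcal{G}\bigr)$ is the right computation.

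As for comparison: the paper does \emph{not} actually prove this theorem. It merely states Hall's Theorem as background and refers the reader to Aigner--Ziegler, \emph{Proofs from THE BOOK}, for a proof. (Incidentally, the proof presented there is essentially the same induction you have outlined.) So there is nothing in the paper to compare against; your proposal stands on its own as a complete and standard proof.

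One minor point worth tightening if you write this out fully: the paper's notation $\mathcal{F} = \{S_1,\ldots,S_n\}$ is ambiguous about whether repeated sets are allowed. Your constructions $\mathcal{F}^\star$ and $\mathcal{F}''$ implicitly treat the family as indexed (so that, e.g., two distinct indices whose sets coincide after deleting $x$ or $T'$ still count separately). This is the correct reading for Hall's Theorem, but you might make it explicit.
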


For an elegant proof of the above result, the reader may refer to~\cite{aigner2010}.

Consider a finite set of objects $S$ and a property $P$. Suppose we have a probability distribution on $S$. If an element $x \in S$ picked randomly according to the distribution has property $P$ with positive probability, then the set $S$ has atleast one element with property $P$. This is the basic idea behind the ``Probabilistic Method'', which has been used to produce elegant proofs of many combinatorial results.

In what follows, we formalise this idea.

\begin{definition}[Dependency Digraph]
Let $E_1, \ldots, E_n$ be events in a probability space. The \emph{dependency digraph} for these events is the graph $G = (V, E)$, where $V = \{1, \ldots, n\}$ and $E = \{(i, j) \mid 1 \leq i, j \leq n,\ i \neq j,\ E_i \textrm{ and } E_j \textrm{ are dependent events}\}$.
\end{definition}

\begin{lemma}[The Lovász Local Lemma]
Let $E_1, \ldots, E_n$ be events in a probability space and $D$ be their dependency digraph. Suppose there are numbers $0 \leq x_i < 1,\ 1 \leq i \leq n$ such that
\[
    P(E_i) \leq x_i \prod_{(i, j) \in E(D)} (1 - x_j), \textrm{ for all } 1 \leq i \leq n.
\]
Then
\[
    P\left(\bigwedge_{i = 1}^n E_i^c\right) \geq \prod_{i = 1}^n (1 - x_i).
\]
In particular, the probability that none of the events $E_i$ occur is positive.
\end{lemma}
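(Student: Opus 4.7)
The plan is to prove by induction on $|S|$ the strengthened statement that for every $S \subseteq \{1, \ldots, n\}$ and every index $i \notin S$,
\[
    P\left(E_i \,\Big|\, \bigwedge_{j \in S} E_j^c\right) \leq x_i.
\]
Granting this, the conclusion will follow from the chain rule:
\[
    P\left(\bigwedge_{i=1}^n E_i^c\right) = \prod_{i=1}^n P\left(E_i^c \,\Big|\, \bigwedge_{j<i} E_j^c\right) \geq \prod_{i=1}^n (1 - x_i),
\]
with each factor strictly positive, so in particular the probability is positive. The base case $|S| = 0$ is immediate from the hypothesis, since $\prod_{(i,j) \in E(D)}(1 - x_j) \leq 1$.

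For the inductive step, I would partition $S = S_1 \sqcup S_2$, where $S_1 = \{\,j \in S : (i,j) \in E(D)\,\}$ collects the neighbours of $i$ in $D$ lying in $S$, and $S_2 = S \setminus S_1$ collects the non-neighbours in $S$. Writing the conditional probability as a ratio,
\[
    P\left(E_i \,\Big|\, \bigwedge_{j \in S} E_j^c\right) = \frac{P\left(E_i \wedge \bigwedge_{j \in S_1} E_j^c \,\Big|\, \bigwedge_{k \in S_2} E_k^c\right)}{P\left(\bigwedge_{j \in S_1} E_j^c \,\Big|\, \bigwedge_{k \in S_2} E_k^c\right)},
\]
the numerator is at most $P\bigl(E_i \mid \bigwedge_{k \in S_2} E_k^c\bigr) = P(E_i)$, using the defining property of the dependency digraph that $E_i$ is independent of any Boolean combination of events indexed by non-neighbours; by hypothesis this is at most $x_i \prod_{(i,j) \in E(D)}(1 - x_j)$. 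The denominator will be bounded below by $\prod_{j \in S_1}(1 - x_j)$ via a further application of the chain rule combined with the inductive hypothesis on sets of strictly smaller cardinality. Cancelling the common $(1 - x_j)$-factors for $j \in S_1$ and noting that the leftover factors corresponding to neighbours of $i$ not in $S_1$ are all $\leq 1$ leaves the bound $x_i$, completing the induction.

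The main obstacle is the lower bound on the denominator: to apply the inductive hypothesis one must enumerate $S_1 = \{j_1, \ldots, j_r\}$ and expand
\[
    P\left(\bigwedge_{\ell=1}^r E_{j_\ell}^c \,\Big|\, \bigwedge_{k \in S_2} E_k^c\right) = \prod_{\ell=1}^r P\left(E_{j_\ell}^c \,\Big|\, \bigwedge_{m<\ell} E_{j_m}^c \wedge \bigwedge_{k \in S_2} E_k^c\right),
\]
so that each conditioning set is a subset of $S$ of size strictly less than $|S|$, allowing the inductive bound $P(E_{j_\ell} \mid \cdots) \leq x_{j_\ell}$ to be applied factor-by-factor. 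The only delicate point is that ``$E_i$ is independent of the events indexed by non-neighbours'' must be interpreted as mutual independence from the whole family $\{E_k : (i,k) \notin E(D)\}$ and not merely pairwise; this is the standard convention built into the definition of the dependency digraph, and once accepted, the remainder of the argument is bookkeeping.
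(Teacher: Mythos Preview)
The paper does not give its own proof of this lemma; it simply refers the reader to Alon--Spencer~\cite{alon2008}. Your argument is precisely the standard induction-on-$|S|$ proof presented there (strengthened inductive claim, split $S$ into neighbours and non-neighbours of $i$ in $D$, bound numerator by independence and denominator by the chain rule plus induction), and it is correct. Your closing remark about needing \emph{mutual} independence of $E_i$ from its non-neighbours, rather than merely pairwise independence as the paper's Definition~2 literally states, is well taken: the lemma as stated requires the stronger convention, and the paper is tacitly assuming it.
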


\begin{corollary}
Let $E_1, \ldots, E_n$ be events in a probability space such that each event is dependent on at most $d$ other events and $P(E_i) \leq p, \textrm{ for all } 1 \leq i \leq n$. If
\[
ep(d + 1) \leq 1,
\]
then
\[
    P\left(\bigwedge_{i = 1}^n E_i^c\right) > 0.
\]
\end{corollary}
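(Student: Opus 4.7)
The plan is to derive this symmetric corollary directly from the Lovász Local Lemma by choosing the weights $x_i$ to be equal. A natural choice, suggested by the appearance of the factor $(d+1)$ in the hypothesis, is to set $x_i = \frac{1}{d+1}$ for every $i \in \{1, \ldots, n\}$. This is a single parameter choice motivated by the fact that the out-degree of every vertex in the dependency digraph is bounded uniformly by $d$, so the products $\prod_{(i,j) \in E(D)}(1-x_j)$ all look alike.

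With this choice, the hypothesis of the Lovász Local Lemma asks us to verify that
\[
    P(E_i) \;\leq\; \frac{1}{d+1}\prod_{(i,j) \in E(D)}\left(1 - \frac{1}{d+1}\right)
\]
for every $i$. Since each factor $1 - \frac{1}{d+1}$ is less than $1$ and the product contains at most $d$ of them, the right-hand side is minimised when there are exactly $d$ factors. Hence it suffices to show that $p \leq \frac{1}{d+1}\left(1 - \frac{1}{d+1}\right)^d$, which in turn would follow from the hypothesis $ep(d+1) \leq 1$ together with the inequality
\[
    \left(1 - \frac{1}{d+1}\right)^d \;\geq\; \frac{1}{e}.
\]

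The main (and only nontrivial) step is therefore establishing this last inequality. I would prove it by rewriting the left-hand side as $\bigl(1 + \tfrac{1}{d}\bigr)^{-d}$ and invoking the standard fact that $\bigl(1 + \tfrac{1}{d}\bigr)^d < e$ for every positive integer $d$, which is immediate from the Taylor series of the exponential or from monotonicity of the sequence defining $e$. (The degenerate case $d = 0$ requires only $p \leq 1$, which is automatic.) Once this inequality is in hand, one concludes $P(E_i) \leq \frac{1}{e(d+1)} \leq \frac{1}{d+1}\left(1 - \frac{1}{d+1}\right)^d$, the hypothesis of the Lovász Local Lemma is satisfied, and the conclusion $P\bigl(\bigwedge_{i=1}^n E_i^c\bigr) \geq \bigl(1 - \tfrac{1}{d+1}\bigr)^n > 0$ follows immediately.
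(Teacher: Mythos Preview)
Your argument is correct and is exactly the standard derivation of the symmetric Local Lemma: set all $x_i = \tfrac{1}{d+1}$ and use $(1-\tfrac{1}{d+1})^d \geq \tfrac{1}{e}$. The paper, however, does not give its own proof of this corollary; it simply refers the reader to Alon and Spencer's \emph{The Probabilistic Method}, where the very same choice of $x_i$ and the same inequality are used, so your approach coincides with the cited one.
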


Proofs of the above lemma and its corollary can be found in~\cite{alon2008}.

Now suppose we want to prove that in a set of objects, there is an object with a certain property. If we can identify ``bad events'', which prevent a randomly picked object from having the property, then by the above corollary, it suffices to show that each ``bad event'', occurs with small enough probability and it does not depend on too many other ``bad events''.

In the next section, using the corollary, we obtain a simple sufficient condition for the existence of a transversal for a finite family of \emph{finite} sets.

\section{The Condition}
We are now ready to present our result.

\begin{theorem}
Let $\mathcal{F} = \{S_1, \ldots, S_n\}$ be a finite family of finite sets. Suppose there are numbers $l$ and $m$ such that $|S_i| \geq l$ and $|S_i \cap S_j| \leq m$, for all $1 \leq i, j \leq n$. If
\[
    \sqrt{em(2n - 3)} \leq l,
\]
then $\mathcal{F}$ has a transversal.
\end{theorem}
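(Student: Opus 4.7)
The plan is to apply the corollary of the Lovász Local Lemma stated above. For each $i$, pick $x_i \in S_i$ uniformly at random, with the choices independent across different $i$. If the $x_i$ turn out to be pairwise distinct, then setting $T = \{x_1, \ldots, x_n\}$ and $\phi(x_i) = S_i$ gives a transversal of $\mathcal{F}$. So it suffices to show that with positive probability no collision occurs.

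Next I would set up the bad events. For each pair $1 \leq i < j \leq n$, let $E_{ij}$ be the event $\{x_i = x_j\}$. A collision happens iff some $E_{ij}$ occurs, so I need $P\!\left(\bigwedge_{i<j} E_{ij}^c\right) > 0$. For the probability bound,
\[
    P(E_{ij}) \;=\; \sum_{y \in S_i \cap S_j} P(x_i = y)\, P(x_j = y) \;=\; \frac{|S_i \cap S_j|}{|S_i|\,|S_j|} \;\leq\; \frac{m}{l^2},
\]
so I can take $p = m/l^2$ uniformly over the bad events.

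For the dependency count, I would use that $E_{ij}$ is a function of $(x_i, x_j)$ alone, and the $x_k$'s are independent. Hence $E_{ij}$ is independent of the family of events $E_{kl}$ with $\{k,l\} \cap \{i,j\} = \emptyset$, so $E_{ij}$ is dependent on at most $2(n-2) = 2n-4$ other bad events (the edges of $K_n$ sharing a vertex with the edge $\{i,j\}$). Setting $d = 2n-4$, the corollary's hypothesis $ep(d+1) \leq 1$ becomes
\[
    e \cdot \frac{m}{l^2} \cdot (2n - 3) \;\leq\; 1,
\]
which rearranges to $l \geq \sqrt{em(2n-3)}$, precisely the hypothesis of the theorem. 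The corollary then gives $P\!\left(\bigwedge_{i<j} E_{ij}^c\right) > 0$, and a random outcome witnessing this produces the desired transversal.

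There is no real obstacle here; the only things to get right are the probability bound (which uses independence of $x_i, x_j$ together with the intersection size bound) and the dependency count (making sure one reads the $+1$ in the corollary correctly, so that $d+1 = 2n-3$ matches the factor inside the square root). The small edge cases ($n = 1$, or pairs where $S_i \cap S_j = \emptyset$) only make the bounds easier.
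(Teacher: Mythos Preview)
Your proof is correct and follows essentially the same approach as the paper: pick $x_i \in S_i$ uniformly and independently, define the bad events $E_{ij} = \{x_i = x_j\}$, bound $P(E_{ij}) \le m/l^2$, count at most $2n-4$ dependent events, and apply the corollary of the Local Lemma with $p = m/l^2$ and $d = 2n-4$. If anything, your justification of the dependency structure (noting that $E_{ij}$ is mutually independent of all events on disjoint index pairs) is slightly more careful than the paper's.
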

\begin{proof}
Consider a random n-tuple $(X_1, \ldots, X_n)$, where $X_i$ is a uniform random variable over $S_i$, for $1 \leq i \leq n$. For $1 \leq i, j \leq n$, denote by $E_{ij}$ the event $\{X_i = X_j\}$. The random tuple is a transversal when none of the events $E_{ij},\ 1 \leq i < j \leq n$ occur. This is exactly the event $\displaystyle\smashoperator{\bigwedge_{1\leq i < j \leq n}} E_{ij}^c$.

We have
\begin{align*}
    P(E_{ij}) &= \frac{|S_i \cap S_j|}{|S_i||S_j|}\\
              &\leq \frac{m}{l^2}.
\end{align*}
Each event $E_{ij}$ only depends on another event $E_{i'j'}$ if either $i = i'$ or $j = j'$. This can happen in $(n - 2) + (n - 2) = 2n - 4$ ways. Applying the Local lemma with $p = \frac{m}{l^2}$ and $d = 2n - 4$, we have
\[
P\left(\smashoperator[r]{\bigwedge_{1\leq i < j \leq n}} E_{ij}^c\right) > 0,
\]
if
\begin{align*}
    ep(d + 1) = e \frac{m}{l^2} (2n - 4 + 1) &\leq 1, \textrm{ i.e.}\\
    em(2n - 3) &\leq l^2, \textrm{i.e.}\\
    \sqrt{em(2n - 3)} &\leq l.
\end{align*}

Thus the event $\displaystyle\smashoperator{\bigwedge_{1\leq i < j \leq n}} E_{ij}^c$ occurs with positive probability, i.e.\ the family $\mathcal{F}$ has a transversal when $\sqrt{m(2n - 3)} \leq l$.
\end{proof}

\section{Matchings in Graphs}
By considering the set of neighbours of each vertex of a graph as an element of a family of sets, we can use the previous theorem to obtain the following result about the existence of matchings which saturate one of the vertex sets in a bipartition of a graph.

\begin{theorem}
Let $G = (V, E)$ be a bipartite graph with no 4-cycles. Let $\{A, B\}$ be a bipartition of $V$ and $n = |A|$. If $\deg(v) \geq \sqrt{2en}$ for all $v \in A$, then $G$ has a matching which saturates $A$.
\end{theorem}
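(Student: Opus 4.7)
The plan is to apply the previous theorem directly by turning the matching problem into a transversal problem. For each $v \in A$, let $S_v = N(v) \subseteq B$ denote the neighborhood of $v$, and consider the family $\mathcal{F} = \{S_v : v \in A\}$, which has $n = |A|$ members. A transversal $(T, \phi)$ of $\mathcal{F}$ assigns to each $v \in A$ a distinct element $\phi^{-1}(S_v) \in S_v = N(v)$, and the edges $\{v\, \phi^{-1}(S_v)\}$ then form a matching saturating $A$. So it suffices to verify the hypotheses of the previous theorem for $\mathcal{F}$.

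First I would choose the parameter $l = \sqrt{2en}$, which is immediate from the degree hypothesis $|S_v| = \deg(v) \geq \sqrt{2en}$. The key structural step is to exploit the absence of 4-cycles to control the pairwise intersections: if two distinct vertices $u, v \in A$ shared two common neighbors $b_1, b_2 \in B$, then $u\, b_1\, v\, b_2$ would be a 4-cycle in $G$. Hence $|S_u \cap S_v| \leq 1$ for all $u \neq v$, so one can take $m = 1$.

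It then remains to verify the numerical condition $\sqrt{em(2n-3)} \leq l$. Substituting $m=1$ and $l = \sqrt{2en}$, this reduces to $e(2n-3) \leq 2en$, i.e., $-3e \leq 0$, which is trivially true. The previous theorem therefore guarantees a transversal for $\mathcal{F}$, which by the remark above yields a matching saturating $A$.

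There is no real obstacle here beyond recognizing the two translations: (sets of neighborhoods) $\leftrightarrow$ (transversal = $A$-saturating matching), and (no 4-cycles) $\leftrightarrow$ (pairwise intersections of size at most one). Once these are in place, the theorem just verified applies almost verbatim, and the slack between $2n-3$ and $2n$ shows that the degree bound $\sqrt{2en}$ is in fact slightly stronger than strictly necessary.
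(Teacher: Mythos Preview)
Your proof is correct and essentially identical to the paper's: the paper defines $S_v$ as the neighborhood of $v$ in $B$, uses the absence of $4$-cycles to get $|S_u \cap S_v| \le 1$ (so $m=1$), sets $l=\sqrt{2en}$, checks $\sqrt{e(2n-3)} \le \sqrt{2en}$, and reads off the matching from the resulting transversal exactly as you do.
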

\begin{proof}
For each $v \in A$, define
\[
    S_v = \{u \in B \mid \textrm{ u is a neighbour of v}\}.
\]

Since $\{A, B\}$ is a bipartition, all neighbours of vertices in $A$ are in $B$. Thus we have
\[
    |S_v| = \deg(v) \geq \sqrt{2en}.
\]

\begin{claim}[For all $u, v \in A,\ |S_u \cap S_v| \leq 1$]
Suppose $|S_u \cap S_v| \geq 2$. Then there are vertices $u', v' \in B$ such that $uu', uv', vu', vv'$ are edges in $G$. Thus $G$ contains the 4-cycle $uu'vv'u$. This contradicts our assumption that $G$ is 4-cycle-free.
\end{claim}
Now we apply our theorem from the previous section with $l = \sqrt{2en}$ and $m = 1$. Since
\[
\sqrt{em(2n - 3)} = \sqrt{e(2n - 3)} \leq \sqrt{2en} = l,
\]
the family $\mathcal{F} = \{S_v \mid v \in A\}$ has a transversal. Let $(B', \phi)$ be the transversal. Then
\[
M = \{\{v, \phi^{-1}(S_v)\} \mid v \in A\}
\]
is a matching which saturates $A$.
\end{proof}

\end{document}